\newtheorem{theorem}{Theorem}
\newtheorem{corollary}[theorem]{Corollary}
\newtheorem{remark}[theorem]{Remark}
\newenvironment{proof}[1][Proof]{\noindent\textbf{#1.}
}{\ \rule{0.5em}{0.5em}}
\begin{document}

\title{Spectral parameter power series for Sturm-Liouville problems}
\author{Vladislav V. Kravchenko and R. Michael Porter\\{\small Department of Mathematics, CINVESTAV del IPN, Unidad Quer\'{e}taro} \\{\small Libramiento Norponiente No. 2000, Fracc. Real de Juriquilla} \\{\small Queretaro, Qro. C.P. 76230 MEXICO } \\{\small e-mail: vkravchenko@qro.cinvestav.mx; mike@math.cinvestav.mx}}
\maketitle

\begin{abstract}
We consider a recently discovered representation for the general solution of
the Sturm-Liouville equation as a spectral parameter power series (SPPS). The
coefficients of the power series are given in terms of a particular solution
of the Sturm-Liouville equation with the zero spectral parameter. We show
that, among other possible applications, this provides a new and efficient
numerical method for solving initial value and boundary value problems.
Moreover, due to its convenient form the representation lends itself to
numerical solution of spectral Sturm-Liouville problems, effectively by
calculation of the roots of a polynomial. We discuss examples of the numerical
implementation of the SPPS method and show it to be equally applicable to a
wide class of singular Sturm-Liouville problems as well as to problems with
spectral parameter dependent boundary conditions.

\end{abstract}

\setcounter{section}{-1}

\section{Introduction}

In the recent work \cite{KrCV2008} a representation was obtained for solutions
of the equation%
\begin{equation}
(pu^{\prime})^{\prime}+qu=\lambda u \label{SL1}%
\end{equation}
in terms of a known non-trivial solution of the equation
\begin{equation}
(pu_{0}^{\prime})^{\prime}+qu_{0}=0 , \label{SL0}%
\end{equation}
where $p$, $q$, $u$, $u_{0}$ are complex-valued functions of the real variable
$x$ satisfying certain smoothness conditions, and $\lambda$ is an arbitrary
complex constant. The general solution of (\ref{SL1}) has the form of a power
series with respect to the spectral parameter $\lambda$. The fact that under
certain conditions $u$ is a complex analytic function of $\lambda$ has long
been known (see, e.g., \cite{Levitan}). The representation from
\cite{KrCV2008} gives a simple procedure for constructing its Taylor coefficients.

Our main purpose here is to show that this representation gives us, among
other possible applications, a simple and powerful method for numerical
solution of initial value, boundary value and spectral problems. We will
generalize slightly by considering the equation
\begin{equation}
(pu^{\prime})^{\prime}+qu=\lambda ru \label{SL}%
\end{equation}
and representing its general solution in terms of a solution of (\ref{SL0}),
which does not depend on the coefficient $r$.

In Section \ref{SectSL} we prove the main theorem and make several
observations which will apply to the subsequent applications. In Section
\ref{SectNSIVP} we observe that the rate of convergence of the spectral
parameter power series (SPPS) corresponding to (\ref{SL}) is quite easy to
estimate. This fact and the simple form of the solutions make the SPPS method
appropriate for solving initial and boundary value problems. Perhaps more
importantly, the representation of a solution of the Sturm-Liouville equation
in the form of an SPPS is well-suited for solving spectral problems as it
reduces problem to calculation of zeros of an analytic function defined by its
Taylor series. In Section \ref{SectSP} we show that even difficult examples
like the Coffey-Evans potential can be handled by the SPPS method; there is
little doubt that the numerical performance of the method can be significantly
improved using additional computational techniques discussed in the paper. In
Section \ref{SectSPDBC} we show that the appearance of the spectral parameter
in the boundary conditions does not provoke additional difficulties in
application of the SPPS method. As an example of application of the SPPS
method to singular Sturm-Liouville problems we consider in Section
\ref{SectSingProblems} a very interesting example of a highly non-self-adjoint
operator (with complex and singular coefficients) discussed in a number of
recent publications
\cite{Benilov,Boulton,Chugunova,Davies,Davies/Weir,Weir,Weir 2}. The SPPS
method makes it possible to obtain eigenvalues and eigenfunctions of the
problem even when other known algorithms encounter numerical difficulties.

\section{Solution of the Sturm-Liouville equation\label{SectSL}}

Here we formulate a generalization of the main result of \cite{KrCV2008},
which allows us to obtain a general solution of the Sturm-Liouville equation
(\ref{SL}) in the form of a spectral parameter power series. The proof does
not depend on pseudoanalytic function theory (which was the tool employed in
\cite{KrCV2008}).

\begin{theorem}
\label{ThSolSL}Assume that on a finite interval $[a,b]$, equation (\ref{SL0})
possesses a particular solution $u_{0}$ such that the functions $u_{0}^{2}r$
and $1/(u_{0}^{2}p)$ are continuous on $[a,b]$. Then the general solution of
(\ref{SL}) on $(a,b)$ has the form
\begin{equation}
u=c_{1}u_{1}+c_{2}u_{2} \label{genmain}%
\end{equation}
where $c_{1}$ and $c_{2}$ are arbitrary complex constants,
\begin{equation}
u_{1}=u_{0}%
{\displaystyle\sum\limits_{k=0}^{\infty}}
\lambda^{k}\widetilde{X}^{(2k)}\quad\text{and}\quad u_{2}=u_{0}%
{\displaystyle\sum\limits_{k=0}^{\infty}}
\lambda^{k}X^{(2k+1)} \label{gensol}%
\end{equation}
with $\widetilde{X}^{(n)}$ and $X^{(n)}$ being defined by the recursive
relations
\begin{equation}
\widetilde{X}^{(0)}\equiv1,\quad X^{(0)}\equiv1, \label{Xgen1}%
\end{equation}%
\begin{equation}
\widetilde{X}^{(n)}(x)=\left\{
\begin{tabular}
[c]{ll}%
$%
{\displaystyle\int\limits_{x_{0}}^{x}}
\widetilde{X}^{(n-1)}(s)u_{0}^{2}(s)r(s)\,ds$, & $n$ \text{odd,}\\
$%
{\displaystyle\int\limits_{x_{0}}^{x}}
\widetilde{X}^{(n-1)}(s)\frac{1}{u_{0}^{2}(s)p(s)}\,ds$, & $n$ \text{even,}%
\end{tabular}
\ \ \ \right.  \label{Xgen2}%
\end{equation}%
\begin{equation}
X^{(n)}(x)=\left\{
\begin{tabular}
[c]{ll}%
$%
{\displaystyle\int\limits_{x_{0}}^{x}}
X^{(n-1)}(s)\frac{1}{u_{0}^{2}(s)p(s)}\,ds$, & $n$ \text{odd,}\\
$%
{\displaystyle\int\limits_{x_{0}}^{x}}
X^{(n-1)}(s)u_{0}^{2}(s)r(s)\,ds$, & $n$ \text{even},
\end{tabular}
\ \ \ \ \ \ \right.  \label{Xgen3}%
\end{equation}
where $x_{0}$ is an arbitrary point in $[a,b]$ such that $p$ is continuous at
$x_{0}$ and $p(x_{0})\neq0$. Further, both series in (\ref{gensol}) converge
uniformly on $[a,b]$.
\end{theorem}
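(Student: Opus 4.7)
The plan is to reduce (\ref{SL}) to a first-order linear system in which $\lambda$ appears only as a multiplicative parameter, solve that system by Picard-type iteration in $\lambda$, and then identify the resulting iterated integrals with the sequences $\widetilde{X}^{(n)}$ and $X^{(n)}$ in the statement.

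First I would substitute $v=u/u_0$ into (\ref{SL}). Using $(pu_0')'+qu_0=0$, a direct computation gives the factorization identity $(pu')'+qu = u_0^{-1}(p u_0^2 v')'$, so (\ref{SL}) is equivalent to $(p u_0^2 v')' = \lambda r u_0^2 v$. Setting $W = p u_0^2 v'$, this becomes the first-order system $v' = W/(p u_0^2)$ and $W' = \lambda\, r u_0^2 v$. The continuity hypotheses on $u_0^2 r$ and $1/(u_0^2 p)$ are exactly what is needed for every iterated integral built from these two coefficients to be well defined and continuous on $[a,b]$.

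Next I would seek a solution of the system as a formal power series in $\lambda$, writing $v = \sum_{k\ge 0}\lambda^k v_k(x)$ and $W = \sum_{k\ge 0}\lambda^k W_k(x)$. Matching powers of $\lambda$ yields $W_0' = 0$ and the recursion $v_k' = W_k/(p u_0^2)$, $W_{k+1}' = r u_0^2 v_k$. I would then impose two complementary sets of initial data at $x_0$: case (i) $v(x_0)=1$, $W(x_0)=0$, which produces the solution $u_1 = u_0 v$; and case (ii) $v(x_0)=0$, $W(x_0)=1$, which produces $u_2 = u_0 v$. Integrating the recursion and comparing with (\ref{Xgen2})--(\ref{Xgen3}), case (i) gives $v_k = \widetilde{X}^{(2k)}$ and case (ii) gives $v_k = X^{(2k+1)}$, matching (\ref{gensol}) exactly.

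For uniform convergence, let $M=\max\{\|u_0^2 r\|_\infty,\|1/(u_0^2 p)\|_\infty\}$ on $[a,b]$, which is finite by hypothesis. A straightforward induction on $n$ from the defining integrals (\ref{Xgen2})--(\ref{Xgen3}) gives the majorant bound $|\widetilde{X}^{(n)}(x)|,|X^{(n)}(x)|\le M^n(b-a)^n/n!$. Hence each series in (\ref{gensol}) is dominated by a convergent series of the form $\sum_k|\lambda|^k M^{2k}(b-a)^{2k}/(2k)!$ (or with an extra factor), so it converges uniformly in $x\in[a,b]$ and locally uniformly in $\lambda$. This majorization also legitimizes termwise differentiation, confirming that the two series-sums actually satisfy the transformed system, and hence $u_1,u_2$ satisfy (\ref{SL}) on $(a,b)$.

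It remains to show $u_1,u_2$ span the solution space. Evaluating at $x_0$ one finds $u_1(x_0)=u_0(x_0)$, $u_2(x_0)=0$, and from the definition $W=p(u_0 u'-u u_0')$ the Wronskian $u_1 u_2'-u_1' u_2$ at $x_0$ equals $1/p(x_0)\ne 0$. So $u_1,u_2$ are linearly independent and (\ref{genmain}) follows from the standard fact that the solution space of a second-order linear ODE on an interval where the leading coefficient does not vanish is two-dimensional. The only step that needs care, in my view, is the bookkeeping in the second paragraph: the odd/even labeling in (\ref{Xgen2}) and (\ref{Xgen3}) is swapped between the two sequences, so one must be attentive to which of $v_k$ and $W_k$ begins the iteration in each case. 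Everything else is routine.
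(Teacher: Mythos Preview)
Your proof is correct and is essentially the same as the paper's: the substitution $v=u/u_0$ giving $(pu_0^2v')'=\lambda ru_0^2 v$ is exactly the factorization $L=\frac{1}{u_0}\frac{d}{dx}\,p\,u_0^{2}\frac{d}{dx}\,\frac{1}{u_0}$ that the authors use, your factorial majorant is the same Weierstrass $M$-test estimate they give, and your Wronskian computation at $x_0$ matches theirs verbatim. The only difference is cosmetic---you package the verification as Picard iteration for a first-order system in $(v,W)$, whereas the paper applies the factorized operator $L$ directly to the formal series and reads off the shift $\widetilde{X}^{(2k)}\mapsto\widetilde{X}^{(2k-2)}$.
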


\begin{proof}
First we prove that $u_{1}$ and $u_{2}$ are indeed solutions of (\ref{SL})
whenever the application of the operator $L=\frac{d}{dx}p\frac{d}{dx}+q$ to
them makes sense. For this, note that if $Lu_{0}=0$, then $L$ can be written
in the factorized form $L=\frac{1}{u_{0}}\frac{d}{dx}\,p\,u_{0}^{2}\frac
{d}{dx}\frac{1}{u_{0}}$. Then application of $\frac{1}{r}L$ to $u_{1}$ gives%
\begin{align*}
\frac{1}{r}Lu_{1}  &  =\frac{1}{ru_{0}}\frac{d}{dx}\left(  pu_{0}^{2}\frac
{d}{dx}%
{\displaystyle\sum\limits_{k=0}^{\infty}}
\lambda^{k}\widetilde{X}^{(2k)}\right)  =\frac{1}{ru_{0}}\frac{d}{dx}%
{\displaystyle\sum\limits_{k=1}^{\infty}}
\lambda^{k}\widetilde{X}^{(2k-1)}\\
&  =u_{0}%
{\displaystyle\sum\limits_{k=1}^{\infty}}
\lambda^{k}\widetilde{X}^{(2k-2)}=\lambda u_{1}.
\end{align*}
In a similar way one can check that $u_{2}$ satisfies (\ref{SL}) as well. In
order to give sense to this chain of equalities it is sufficient to prove the
uniform convergence of the series involved in $u_{1}$ and $u_{2}$ as well as
of the series obtained by a term-wise differentiation. This can be done with
the aid of the Weierstrass M-test. Indeed, we have $\left\vert \widetilde
{X}^{(2k)}\right\vert \leq\left(  \max\left\vert ru_{0}^{2}\right\vert
\right)  ^{k}\left(  \max\left\vert \frac{1}{pu_{0}^{2}}\right\vert \right)
^{k}\frac{\left\vert b-a\right\vert ^{2k}}{\left(  2k\right)  !}$ and the
series $%
{\displaystyle\sum\limits_{k=0}^{\infty}}
\frac{c^{k}}{\left(  2k\right)  !}$ is convergent where
\begin{equation}
c=\left\vert \lambda\right\vert \left(  \max\left\vert ru_{0}^{2}\right\vert
\right)  \left(  \max\left\vert \frac{1}{pu_{0}^{2}}\right\vert \right)
\left\vert b-a\right\vert ^{2}. \label{c}%
\end{equation}
The uniform convergence of the series in $u_{2}$ as well as of the series of
derivatives can be shown similarly.

The last step is to verify that the Wronskian of $u_{1}$ and $u_{2}$ is
different from zero at least at one point (which necessarily implies the
linear independence of $u_{1}$ and $u_{2}$ on the whole segment $[a,b]$). It
is easy to see that by definition all the $\widetilde{X}^{(n)}(x_{0})$ and
$X^{(n)}(x_{0})$ vanish except for $\widetilde{X}^{(0)}(x_{0})$ and
$X^{(0)}(x_{0})$ which equal $1$. Thus%
\begin{equation}
u_{1}(x_{0})=u_{0}(x_{0}),\qquad u_{1}^{\prime}(x_{0})=u_{0}^{\prime}(x_{0}),
\label{at01}%
\end{equation}%
\begin{equation}
u_{2}(x_{0})=0,\qquad u_{2}^{\prime}(x_{0})=\frac{1}{u_{0}(x_{0})p(x_{0})}
\label{at02}%
\end{equation}
and the Wronskian of $u_{1}$ and $u_{2}$ at $x_{0}$ equals $1/p(x_{0})\not =0$.
\end{proof}

\begin{remark}
In the case $\lambda=0$, the solution (\ref{gensol}) becomes $u_{1}=u_{0}$ and
$u_{2}=u_{0}%
{\displaystyle\int\limits_{x_{0}}^{x}}
\frac{ds}{u_{0}^{2}(s)p(s)}$. The expression for $u_{2}$ is a well known
formula for constructing a second linearly independent solution.
\end{remark}

\begin{remark}
The result of Theorem \ref{ThSolSL} is valid for infinite intervals as well,
the series being uniformly convergent on any finite subinterval.
\end{remark}

\begin{remark}
\label{RemSing}One of the functions $ru_{0}^{2}$ or $1/(pu_{0}^{2})$ may not
be continuous on $[a,b]$ and yet\/ $u_{1}$ or $u_{2}$ may make sense. For
example, in the case of the Bessel equation $(xu^{\prime})^{\prime}-\frac
{1}{x}u=-\lambda xu,$ we can choose $u_{0}(x)=x/2.$ Then $1/(pu_{0}^{2})\notin
C[0,1]$. Nevertheless all integrals in (\ref{Xgen2}) exist and\/ $u_{1}$
coincides with the nonsingular $J_{1}(\sqrt{\lambda}x)$, while $u_{2}$ is a
singular solution of the Bessel equation.
\end{remark}

\begin{remark}
\label{RemRegular} In the regular case the existence and construction of the
required $u_{0}$ presents no difficulty. Let $p$ and $q$ be real valued,
$p(x)\neq0$ for all $x\in\lbrack a,b]$ and let\/ $p$, $p^{\prime}$, $r$ and
$q$ be continuous on $[a,b]$. Then (\ref{SL0}) possesses two linearly
independent regular solutions\/ $v_{1}$ and $v_{2}$ whose zeros alternate.
Thus one may choose $u_{0}=v_{1}+iv_{2}$.

Even when the coefficient functions are not real-valued, nonvanishing
solutions abound. Assume that the equation possesses a pair of linearly
independent solutions $v_{1}$ and $v_{2}$. Let $A_{j}=\{x\in(a,b)\colon
\ v_{j}(x)\not =0\}$, $j=1,2$. The function $\psi(c,x)=cv_{1}(x)+v_{2}(x)$ for
$c\in\mathbb{C}$, $x\in A_{1}$, has maximal (real) rank 2 (i.e., $\psi$ is a
submersion) if and only if $v_{1}(x)\not =0$. By the preimage theorem
\cite[chapter 1]{GP1974}, the preimage $\psi^{-1}(0)$ is a $1$-dimensional
submanifold of $\mathbb{C}\times A_{1}$ and therefore its projection
$B_{1}\subseteq\mathbb{C}$ has zero measure. For every $c\in\mathbb{C}%
\backslash B_{1}$, the linear combination $cv_{1}+v_{2}$ does not vanish in
$A_{1}$. The same reasoning gives us a null set\/ $B_{2}$ such that
$v_{1}+cv_{2}$ does not vanish in\/ $A_{2}$ when $c\in\mathbb{C}\backslash
B_{2}$. Altogether, there exist nonvanishing solutions throughout\/ $A_{1}\cup
A_{2}$, which is all of\/ $(a,b)$ because\/ $v_{1},$ $v_{2}$ are linearly
independent solutions of a second order differential equation.
\end{remark}

\begin{remark}
\label{RemLambda}The procedure for construction of solutions described in
Theorem \ref{ThSolSL} works not only when a solution is available for
$\lambda=0$, but in fact when a solution of the equation
\begin{equation}
(pu_{0}^{\prime})^{\prime}+qu_{0}=\lambda_{0}ru_{0} \label{eqlambda}%
\end{equation}
is known for some fixed $\lambda_{0}$. The solution (\ref{gensol}) now takes
the form%
\[
u_{1}=u_{0}%
{\displaystyle\sum\limits_{k=0}^{\infty}}
\left(  \lambda-\lambda_{0}\right)  ^{k}\widetilde{X}^{(2k)}\quad
\text{and}\quad u_{2}=u_{0}%
{\displaystyle\sum\limits_{k=0}^{\infty}}
\left(  \lambda-\lambda_{0}\right)  ^{k}X^{(2k+1)}.
\]
This can be easily verified by writing (\ref{SL}) as%
\[
\left(  L-\lambda_{0}r\right)  u=\left(  \lambda-\lambda_{0}\right)  ru.
\]
The operator on the left-hand side can be factorized exactly as in the proof
of the theorem, and the same reasoning carries through.
\end{remark}

\begin{remark}
\label{RemDouble} For calculating the series in (\ref{gensol}) it may be
convenient to calculate $X^{(n)}$ or $\widetilde{X}^{(n)}$ directly from
$X^{(n-2)}$ or $\widetilde{X}^{(n-2)}$. For example, when $n$ is even we have
\begin{align*}
\widetilde{X}^{(n)}(x)  & =\int_{x_{0}}^{x}\frac{1}{u_{0}(s)^{2}p(s)}%
\int_{x_{0}}^{s}u_{0}(t)^{2}r(t)\widetilde{X}^{(n-2)}(t)\,dt\,ds\\
& =\int_{x_{0}}^{x}(P(x)-P(t))u_{0}(t)^{2}r(t)\widetilde{X}^{(n-2)}(t)\,dt
\end{align*}
where $P^{\prime}=1/(u_{0}^{2}p)$.
\end{remark}

\begin{remark}
Other representations of the general solution of (\ref{SL}) as a formal power
series have been long known (see \cite[Theorem 1]{Trubowitz},
\cite{Chanane1998}) and used for studying qualitative properties of solutions.
The complicated manner in which the parameter $\lambda$ appears in those
representations makes that form of a general solution too difficult for
quantitative analysis of spectral and boundary value problems. In contrast,
the solution (\ref{genmain})-(\ref{Xgen3}) is a power series with respect to
$\lambda$, making it quite attractive for numerical solution of spectral,
initial value and boundary value problems.
\end{remark}

A special case of Theorem \ref{ThSolSL}, with $q\equiv0$, $\lambda=1$, was
known to H.\ Weyl.

\begin{corollary}
[\cite{Weyl}]\label{CorWeyl} Let\/ $1/p$ and\/ $r$ be continuous on\/ $[a,b]$.
The general solution of the equation
\begin{equation}
(pu^{\prime})^{\prime}=ru
\end{equation}
on $\left(  a,b\right)  $ has the form%
\begin{equation}
u=c_{1}u_{1}+c_{2}u_{2} \label{gensolSchr1}%
\end{equation}
where $c_{1}$ and $c_{2}$ are arbitrary constants and $u_{1}$, $u_{2}$ are
defined by (\ref{gensol})--(\ref{Xgen3}) with $u_{0}\equiv\lambda=1$.
\end{corollary}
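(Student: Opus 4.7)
The plan is to recognize Corollary \ref{CorWeyl} as a direct specialization of Theorem \ref{ThSolSL}, obtained by setting $q\equiv 0$ and taking the spectral value $\lambda=1$. With this identification, the equation $(pu')' = ru$ of the corollary is precisely equation (\ref{SL}) evaluated at $\lambda=1$, and equation (\ref{SL0}) reduces to $(pu_0')'=0$. So the task is simply to exhibit an admissible $u_0$ and check the hypotheses of Theorem \ref{ThSolSL}.

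First I would produce the required particular solution: the constant function $u_0 \equiv 1$ trivially satisfies $(pu_0')' = 0$. With this choice, the two continuity conditions of Theorem \ref{ThSolSL}, namely $u_0^2 r \in C[a,b]$ and $1/(u_0^2 p) \in C[a,b]$, reduce to the continuity of $r$ and $1/p$ on $[a,b]$, which are exactly the standing hypotheses of the corollary. The admissibility of $x_0$ (that $p$ is continuous and nonzero at $x_0$) follows from $1/p \in C[a,b]$, which forces $p$ to be continuous and nonvanishing everywhere on $[a,b]$, so in fact any $x_0 \in [a,b]$ may be used.

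Having verified the hypotheses, I would invoke Theorem \ref{ThSolSL} directly. Substituting $u_0 \equiv 1$ and $\lambda = 1$ into (\ref{gensol})--(\ref{Xgen3}) produces the stated $u_1$ and $u_2$; Theorem \ref{ThSolSL} guarantees uniform convergence on $[a,b]$ and that both functions solve $(pu')' = ru$. Linear independence is inherited as well, since their Wronskian at $x_0$ equals $1/p(x_0) \neq 0$, so $c_1 u_1 + c_2 u_2$ exhausts the two-dimensional solution space. There is no genuine obstacle in this argument; the content of the corollary is precisely the observation that in the pure Weyl setting $q\equiv 0$, the canonical choice $u_0 \equiv 1$ makes the structural hypotheses of Theorem \ref{ThSolSL} automatic.
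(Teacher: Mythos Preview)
Your proposal is correct and matches the paper's approach exactly: the paper presents this corollary without a separate proof, introducing it simply as the special case of Theorem~\ref{ThSolSL} with $q\equiv 0$ and $\lambda=1$, and your verification that $u_0\equiv 1$ satisfies the hypotheses under the assumed continuity of $r$ and $1/p$ is precisely the intended (and only) content.
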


This corollary enables us to find the particular solution $u_{0}$ discussed in
Remark \ref{RemRegular}.

\bigskip

\section{Numerical solution of initial value problems}

\label{SectNSIVP}

Consider the Sturm-Liouville equation (\ref{SL}) on $[a,b]$ with any desired
initial conditions. The numerical implementation of the solution via the
representation (\ref{gensol}) for a general solution is algorithmically
simple. One must consider the accuracy of calculation of the iterated
integrals in (\ref{Xgen2}) and (\ref{Xgen3}), and the rate of convergence of
the series (\ref{gensol}), because in numerical work one must work with
finitely many terms.

The main parameters that one can control are the number $M$ of subintervals in
which to divide $[a,b]$ when integrating numerically and the number $N$ of
powers in the truncated series. The relationship of $M$ to the accuracy of the
integrals is a standard question and will not be discussed here. In regards to
$N$, observe that one can not always expect a good approximation to $u$ over
all of $[a,b]$ with a series of $N$ terms, no matter how precisely the
integrals are calculated. However, using the estimate for $\left\vert
\widetilde{X}^{(2k)}\right\vert $ and $\left\vert X^{(2k-1)}\right\vert $ (see
the proof of Theorem \ref{ThSolSL} below) it is easy to obtain a rough but
useful estimate for the tail of the SPPS. Namely, consider $\left\vert
u_{1}-u_{1,N}\right\vert $ where $u_{1,N}=u_{0}%
{\displaystyle\sum\limits_{k=0}^{N}}
\lambda^{k}\widetilde{X}^{(2k)}$. We have
\begin{align*}
\left\vert u_{1}-u_{1,N}\right\vert  &  =\left\vert u_{0}\right\vert
\left\vert
{\displaystyle\sum\limits_{k=N+1}^{\infty}}
\lambda^{k}\widetilde{X}^{(2k)}\right\vert \leq\max\left\vert u_{0}%
\right\vert
{\displaystyle\sum\limits_{k=N+1}^{\infty}}
\frac{c^{k}}{\left(  2k\right)  !}\\
&  =\max\left\vert u_{0}\right\vert \left\vert \cosh\sqrt{c}-%
{\displaystyle\sum\limits_{k=0}^{N}}
\frac{c^{k}}{\left(  2k\right)  !}\right\vert
\end{align*}
where $c$ is defined by (\ref{c}). In a similar way one can see that the
remainder of the SPPS corresponding to $u_{2}$ is estimated by the tail of the
power series of $\sinh\sqrt{c}$. Thus, if a certain value of $N$ is seen to be
insufficient for achieving a required accuracy, the interval can be subdivided
and the initial value problem solved on the first subinterval. The initial
values of the solution for the second subinterval are calculated easily taking
into account that $u_{1}^{\prime}=\frac{u_{0}^{\prime}}{u_{0}}u_{1}+\frac
{1}{u_{0}p}%
{\displaystyle\sum\limits_{k=1}^{\infty}}
\lambda^{k}\widetilde{X}^{(2k-1)}$ (and analogously for $u_{2}$). Thus, no
numerical differentiation is necessary and this process can be continued with
little loss in accuracy.

The required particular solution $u_{0}$ may be calculated using any available
algorithm; in the examples presented below we have applied the formula of
Corollary \ref{CorWeyl}, applying the above subdivision procedure. All of the
calculations were performed with \textit{Mathematica} (Wolfram).

\section{Spectral problems}

\label{SectSP}

The fact that spectral Sturm-Liouville problems are related to the problem of
finding zeros of complex analytic functions of the variable $\lambda$ is quite
well known (see, e.g., \cite{Levitan}). For a regular Sturm-Liouville problem
the corresponding analytic function is even entire. The representation
(\ref{genmain})--(\ref{Xgen3}) allows us to obtain the Taylor series of that
analytic function explicitly. As an example, let us first consider a spectral
problem for (\ref{SL}) with the boundary conditions%
\begin{equation}
u(0)=0\quad\text{and\quad}u(1)=0. \label{bvpexample}%
\end{equation}
We suppose that the coefficients satisfy the conditions from Remark
\ref{RemRegular} and that $u_{0}$ is constructed as described there, taking
$x_{0}=0$. From the first boundary condition and (\ref{at01}), the constant
$c_{1}$ in (\ref{genmain}) must be zero. Then the spectral problem reduces to
finding values of $\lambda$ for which $u_{2}(1)=u_{0}(1)%
{\displaystyle\sum\limits_{k=0}^{\infty}}
\lambda^{k}X^{(2k+1)}(1)$ vanishes. In other words, this spectral problem
reduces to the calculation of zeros of the complex analytic function
$\kappa(\lambda)=%
{\displaystyle\sum\limits_{m=0}^{\infty}}
a_{m}\lambda^{m}$ where%
\[
a_{m}=u_{0}(1)X^{(2k+1)}(1).
\]

Now let $\alpha$ and $\beta$ be arbitrary real numbers and consider the more
general boundary conditions%
\begin{equation}
u(a)\cos\alpha+u^{\prime}(a)\sin\alpha=0\label{at0}%
\end{equation}%
\begin{equation}
u(b)\cos\beta+u^{\prime}(b)\sin\beta=0\label{atbeta}%
\end{equation}
together with equation (\ref{SL}). Taking the solutions $u_{1}$ and $u_{2}$
defined by (\ref{gensol}) and using (\ref{at01}), (\ref{at02}) with $x_{0}=a$,
we obtain from (\ref{at0}) the following equation,
\[
c_{1}(u_{0}(a)\cos\alpha+u_{0}^{\prime}(a)\sin\alpha)+c_{2}\frac{\sin\alpha
}{u_{0}(a)p(a)}=0,
\]
which gives $c_{2}=\gamma c_{1}$ when $\alpha\neq\pi n$, with $\gamma
=-u_{0}(a)p(a)(u_{0}(a)\cot\alpha+u_{0}^{\prime}(a))$, whereas $c_{1}=0$ when
$\alpha=\pi n.$ In the latter case the result is similar to the example
considered above, thus let us suppose $\alpha\neq\pi n$. From the definition
of $u_{1}$ and $u_{2}$ we have%
\[
u_{1}^{\prime}=\frac{u_{0}^{\prime}}{u_{0}}u_{1}+\frac{1}{u_{0}p}%
{\displaystyle\sum\limits_{k=1}^{\infty}}
\lambda^{k}\widetilde{X}^{(2k-1)}\quad\text{and\quad}u_{2}^{\prime}%
=\frac{u_{0}^{\prime}}{u_{0}}u_{2}+\frac{1}{u_{0}p}%
{\displaystyle\sum\limits_{k=0}^{\infty}}
\lambda^{k}X^{(2k)}.
\]
Then the boundary condition (\ref{atbeta}) implies that%
\[
\left(  u_{0}(b)\cos\beta+u_{0}^{\prime}(b)\sin\beta\right)  \left(
{\displaystyle\sum\limits_{k=0}^{\infty}}
\lambda^{k}\widetilde{X}^{(2k)}(b)+\gamma%
{\displaystyle\sum\limits_{k=0}^{\infty}}
\lambda^{k}X^{(2k+1)}(b)\right)
\]%
\[
+\frac{\sin\beta}{u_{0}(b)p(b)}\left(
{\displaystyle\sum\limits_{k=1}^{\infty}}
\lambda^{k}\widetilde{X}^{(2k-1)}(b)+\gamma%
{\displaystyle\sum\limits_{k=0}^{\infty}}
\lambda^{k}X^{(2k)}(b)\right)  =0.
\]
Thus the spectral problem (\ref{SL}), (\ref{at0}), (\ref{atbeta}) reduces to
the problem of calculating zeros of the analytic function $\kappa(\lambda)=%
{\displaystyle\sum\limits_{m=0}^{\infty}}
a_{m}\lambda^{m}$ where
\[
a_{0}=\left(  u_{0}(b)\cos\beta+u_{0}^{\prime}(b)\sin\beta\right)  (1+\gamma
X^{(1)}(b))+\frac{\gamma\sin\beta}{u_{0}(b)p(b)}%
\]
and%
\[
a_{m}=\left(  u_{0}(b)\cos\beta+u_{0}^{\prime}(b)\sin\beta\right)  \left(
\widetilde{X}^{(2m)}(b)+\gamma X^{(2m+1)}(b)\right)
\]%
\[
+\frac{\sin\beta}{u_{0}(b)p(b)}\left(  \widetilde{X}^{(2m-1)}(b)+\gamma
X^{(2m)}(b)\right)  ,\quad m=1,2,\ldots.
\]

This reduction of a Sturm-Liouville spectral problem lends itself to a simple
numerical implementation. To calculate the first $n$ eigenvalues we consider
the Taylor polynomial $\kappa_{N}(\lambda)=%
{\displaystyle\sum\limits_{m=0}^{N}}
a_{m}\lambda^{m}$ with $N\geq n$. Thus the numerical approximation of
eigenvalues of the Sturm-Liouville problem reduces to the calculation of zeros
of the polynomial $\kappa_{N}(\lambda)$.

There is no need to work with zeros of only one polynomial. It is well known
that in general the higher roots of a polynomial become less stable with
respect to small inaccuracies in coefficients. Our spectral parameter power
series method is well suited to overcome this problem and thus to calculate
higher eigenvalues with a good accuracy. This is done using Remark
\ref{RemLambda}. Suppose we have already calculated the eigenvalue
$\lambda_{0}$ using the procedure described above as a first root of the
obtained polynomial. Then for the next step we define $U_{0}=u_{1}+iu_{2}$
where $u_{1}$ and $u_{2}$ are defined by (\ref{gensol}) with $\lambda
=\lambda_{0}$. The function $U_{0}$ is then a solution of (\ref{eqlambda}). We
use it to obtain the eigenvalue $\lambda_{1}$ of the original problem
observing that $\lambda_{1}=\Lambda_{1}+\lambda_{0}$ where $\Lambda_{1}$ is
the first eigenvalue of the equation $(L-\lambda_{0}r)u=\Lambda u$ with the
same boundary conditions as in the original problem. This procedure can be
continued for calculating higher eigenvalues. Note that if $\lambda_{0}=0$ we
should begin this shifting procedure starting with $\lambda_{1}$.

Here we discuss some numerical examples.

\textit{Paine Problem.} A number of spectral problems which have become
standard test cases appear in \cite{PainedeHoogAnderssen,Pryce}. As a first
example we consider
\[
p(x)=-1,\quad q(x)=\frac{1}{(x+0.1)^{2}}.
\]%
\[
u(0)=0,\quad u(\pi)=0.
\]

The eigenvalues in the following table were calculated via SPPS using
integration on 10,000 subintervals for calculating $N=100$ powers of $\lambda
$. These eigenvalues were found as roots of a single polynomial (i.e., the
shifting of $\lambda$ as described in Remark \ref{RemLambda} was not applied).
Due to the sensitivity of the larger roots of the polynomial to errors in the
coefficients, 100-digit arithmetic was used.

\medskip

\begin{center}%
\begin{tabular}
[c]{|r|r@{.}l|r@{.}l|}\hline
$n$ & \multicolumn{2}{c|}{$\lambda_{n}$ \cite{Pryce}} &
\multicolumn{2}{c|}{$\lambda_{n}$ SPPS}\\\hline
0 & 1 & 5198658211 & 1 & 519865821099\\\hline
1 & 4 & 9433098221 & 4 & 943309822144\\\hline
2 & 10 & 284662645 & 10 & 28466264509\\\hline
3 & 17 & 559957746 & 17 & 55995774633\\\hline
4 & 26 & 782863158 & 26 & 78286315899\\\hline
5 & 37 & 964425862 & 37 & 96442587941\\\hline
6 & 51 & 113357757 & 51 & 11335707578\\\hline
7 & 66 & 236447704 & 66 & 23646092491\\\hline
8 & 83 & 338962374 & 83 & 33879073183\\\hline
9 & 102 & 42498840 & 102 & 4259718823\\\hline
10 & 123 & 49770680 & 123 & 512483827\\\hline
\end{tabular}

\end{center}

\medskip On the basis of the above values, a new calculation was made by
shifting with $\lambda^{\ast}=66$, resulting in the following improved
approximations for the last few eigenvalues.

\medskip

\begin{center}%
\begin{tabular}
[c]{|r|r@{.}l|r@{.}l|}\hline
$n$ & \multicolumn{2}{c|}{$\lambda_{n}$ \cite{Pryce}} &
\multicolumn{2}{c|}{$\lambda_{n}$ SPPS}\\\hline
7 & 66 & 236447704 & 66 & 23644770359\\\hline
8 & 83 & 338962374 & 83 & 33896237419\\\hline
9 & 102 & 42498840 & 102 & 42498839828\\\hline
10 & 123 & 49770680 & 123 & 49770680101\\\hline
11 & 146 & 55960608 & 146 & 55960605783\\\hline
12 & 171 & 61264485 & 171 & 61265439928\\\hline
\end{tabular}

\end{center}

\medskip With $\lambda^{*}=146$ and increasing the number of powers to
$N=150$, the following further values were obtained.

\begin{center}%
\begin{tabular}
[c]{|r|r@{.}l|r@{.}l|}\hline
$n$ & \multicolumn{2}{c|}{$\lambda_{n}$ \cite{Pryce}} &
\multicolumn{2}{c|}{$\lambda_{n}$ SPPS}\\\hline
11 & 146 & 55960608 & 146 & 55586199495330\\\hline
12 & 171 & 61264485 & 171 & 60875781110985\\\hline
13 & 198 & 65837500 & 198 & 65416389844202\\\hline
\end{tabular}

\end{center}

When the number of digits for internal calculations was increased to 150, SPPS
produced the same results.

\medskip\textit{Coffey-Evans equation.} This test case, defined by
\[
p(x)=-1,\quad q(x)=-2\beta\cos2x+\beta^{2}\sin^{2}2x.
\]%
\[
u(-\pi/2)=0,\quad u(\pi/2)=0,
\]
presents the challenge of distinguishing eigenvalues within the triple
clusters which form as the parameter $\beta$ increases. We present results for
$\beta=20,30,50$. In all cases given here the eigenvalues were obtained
without shifting $\lambda$.

\begin{center}
$\beta=20$. \newline$M=10,000$ subintervals, $N$= 180 powers, 100 digits of precision.%

\begin{tabular}
[c]{|l|r@{.}l|r@{.}l|}\hline
$n$ & \multicolumn{2}{c|}{$\lambda_{n}$ \cite{ChildChambers,Ledoux}} &
\multicolumn{2}{c|}{$\lambda_{n}$ SPPS}\\\hline
0 & -0 & 00000000000000 & 0 & 0000000000000003\\\hline
1 & 77 & 91619567714397 & 77 & 9161956771439703\\\hline
2 & 151 & 46277834645663 & 151 & 4627783464566396\\\hline
3 & 151 & 46322365765863 & 151 & 4632236576586490\\\hline
4 & 151 & 46366898835165 & 151 & 4636689883516575\\\hline
5 & 220 & 15422983525995 & 220 & 1542298352599497\\\hline
6 & 283 & 0948 & 283 & 0948146954014377\\\hline
7 & 283 & 2507 & 283 & 2507437431126800\\\hline
8 & 283 & 4087 & 283 & 4087354034293064\\\hline
\end{tabular}

\end{center}

\smallskip

\begin{center}
$\beta=30$\newline$M=10,000$ subintervals, $N=150$ powers, 100 digits  of
precision.
\begin{tabular}
[c]{|l|r@{.}l|r@{.}l|}\hline
$n$ & \multicolumn{2}{c|}{$\lambda_{n}$ \cite{Ledoux,Pryce}} &
\multicolumn{2}{c|}{$\lambda_{n}$ SPPS}\\\hline
0 & 0 & 00000000000000 & 0 & 000000000000000002\\\hline
1 & 117 & 946307662070 & 117 & 94630766206876\\\hline
2 & \multicolumn{2}{c|}{} & 231 & 664928928423790\\\hline
3 & 231 & 66492931296 & 231 & 664928928423791\\\hline
4 & \multicolumn{2}{c|}{} & 231 & 664930082035462\\\hline
5 & \multicolumn{2}{c|}{} & 340 & 888299091685489\\\hline
6 & \multicolumn{2}{c|}{} & 403 & 219684016171863\\\hline
7 & \multicolumn{2}{c|}{} & 403 & 219684016171917\\\hline
\end{tabular}

$\beta=50$ \newline$M=10,000$ subintervals, $N=150$ powers, 100 digits of
precision.
\begin{tabular}
[c]{|l|r@{.}l|r@{.}l|}\hline
$n$ & \multicolumn{2}{c|}{$\lambda_{n}$ \cite{Pryce}} &
\multicolumn{2}{c|}{$\lambda_{n}$ SPPS}\\\hline
0 & 0 & 00000000000000 & 0 & 000000000000000003\\\hline
1 & 197 & 968726516507 & 197 & 96872651650729\\\hline
2 & \multicolumn{2}{c|}{} & 391 & 807\\\hline
3 & 391 & 80819148905 & 391 & 810\\\hline
4 & \multicolumn{2}{c|}{} & 547 & 1397060\\\hline
\end{tabular}

\end{center}

\section{Sturm-Liouville problems with spectral parameter dependent boundary
conditions\label{SectSPDBC}}

In this section we consider Sturm-Liouville problems of the form%
\begin{equation}
(pu^{\prime})^{\prime}+qu = \lambda ru,\quad x\in\lbrack a,b], \label{SL51}%
\end{equation}
\begin{equation}
u(a)\cos\alpha+u^{\prime}(a)\sin\alpha=0, \quad\alpha\in\lbrack0,\pi),
\label{SL52}%
\end{equation}
\begin{equation}
\beta_{1}u(b)-\beta_{2}u^{\prime}(b)= \varphi(\lambda) \left(  \beta
_{1}^{\prime}u(b)-\beta_{2}^{\prime}u^{\prime}(b)\right)  , \label{SL53}%
\end{equation}
where $\varphi$ is a complex-valued function of the variable $\lambda$ and
$\beta_{1}$, $\beta_{2}$, $\beta_{1}^{\prime}$, $\beta_{2}^{\prime}$ are
complex numbers. This kind of problem arises in many physical applications (we
refer to \cite{BenAmara} and references therein) and has been studied in a
considerable number of publications
\cite{BenAmara,Chanane2008,CodeBrowne2005,CoskunBayram2005, Fulton77,Walter}.
For some special forms of the function $\varphi$ such as $\varphi
(\lambda)=\lambda$ or $\varphi(\lambda)=\lambda^{2}+c_{1}\lambda+c_{2}$,
results were obtained \cite{CodeBrowne2005}, \cite{Walter} concerning the
regularity of the problem (\ref{SL51})--(\ref{SL53}); we will not dwell upon
the details. Our purpose is to show the applicability of the spectral
parameter power series (SPPS) method to this type of Sturm-Liouville problems.
For simplicity, let us suppose that $\alpha=0$ and hence the condition
(\ref{SL52}) becomes $u(a)=0$. Then as was shown in the preceding section, if
an eigenfunction exists it necessarily coincides with $u_{2}$ up to a
multiplicative constant.

In this case condition (\ref{SL53}) becomes equivalent to the equality%
\begin{equation}
\left(  u_{0}(b)\varphi_{1}(\lambda)-u_{0}^{\prime}(b)\varphi_{2}%
(\lambda)\right)
{\displaystyle\sum\limits_{k=0}^{\infty}}
\lambda^{k}X^{(2k+1)}(b)-\frac{\varphi_{2}(\lambda)}{u_{0}(b)p(b)}%
{\displaystyle\sum\limits_{k=0}^{\infty}}
\lambda^{k}X^{(2k)}(b)=0 \label{eqSLparam}%
\end{equation}
where $\varphi_{1,2}(\lambda)=\beta_{1,2}-\beta_{1,2}^{\prime}\varphi
(\lambda)$. Calculation of eigenvalues given by (\ref{eqSLparam}) is
especially simple in the case of $\varphi$ being a polynomial of $\lambda$.
Precisely this particular situation was considered in all of the
abovementioned references concerning Sturm-Liouville problems with spectral
parameter dependent boundary conditions. For these problems the calculation of
eigenvalues using our method does not present any additional difficulty
compared to the parameter independent situation discussed in the preceding section.

\section{Singular problems\label{SectSingProblems}}

As was mentioned in Remark \ref{RemSing}, one of the solutions of the
Sturm-Liouville equation can be singular and nevertheless the method presented
here is still applicable. We show one such application to an interesting
problem first considered in \cite{Benilov} and then in a number of recent
publications \cite{Boulton,Chugunova,Davies,Davies/Weir,Weir,Weir 2}. We
consider on the interval $(-\pi,\pi)$ the singular non-symmetric differential
equation%
\begin{equation}
-i\varepsilon\frac{d}{dx}\left(  \sin x\frac{du}{dx}\right)  -i\frac{du}%
{dx}=\lambda u\label{HeatExample}%
\end{equation}
with $0<\varepsilon<2$ and periodic conditions at $-\pi$ and $\pi$. In spite
of the fact that (\ref{HeatExample}) is highly non-self-adjoint (with complex
and singular coefficients) all of the eigenvalues are real \cite{Weir} and the
spectrum is discrete \cite{Davies}. Several algorithms have appeared for
approximating the eigenvalues to which we compare our results. It is known
(e.g., \cite{Boulton}) that for each $\lambda\in\mathbb{C}$, equation
(\ref{HeatExample}) possesses a unique (up to scalar multiples) solution in
$L^{2}(-\pi,\pi)$ which we denote by $\varphi(x,\lambda)$. This solution may
be normalized by the condition $\varphi(0,\lambda)=1$. Any solution linearly
independent of $\varphi$ will blow up as $x\rightarrow0$. Also \cite{Boulton}
$\lambda$ is an eigenvalue if and only if
\[
\varphi(-\pi,\lambda)=\varphi(\pi,\lambda).
\]
It is not difficult to rewrite equation (\ref{HeatExample}) in the form
(\ref{SL}),
\begin{equation}
-i\varepsilon\frac{d}{dx}\left(  \sin x\left(  \tan\frac{x}{2}\right)
^{1/\varepsilon}\frac{du}{dx}\right)  =\lambda\left(  \tan\frac{x}{2}\right)
^{1/\varepsilon}u.\label{HeatExSL}%
\end{equation}
Choosing $u_{0}\equiv1$ as a particular solution corresponding to $\lambda=0$
we immediately find that $u_{1}$ defined by (\ref{gensol}) belongs to
$L^{2}(-\pi,\pi)$, while $u_{2}$ is the singular solution blowing up as
$x\rightarrow0$. Moreover, due to the particular form of the coefficients in
(\ref{HeatExSL}) we have from Remark \ref{RemDouble} that
\begin{align*}
\widetilde{X}^{(2k)}(x) &  =-\left(  \tan\frac{x}{2}\right)  ^{-1/\varepsilon
}\int_{0}^{x}\left(  \tan\frac{s}{2}\right)  ^{1/\varepsilon}\widetilde
{X}^{2(k-1)}(s)\,ds+\int_{0}^{x}\widetilde{X}^{2(k-1)}(s)\,ds,\quad\\
k &  =1,2,\ldots.
\end{align*}
Both integrals are well behaved at $x=\pi$. Moreover, the integral $\int
_{0}^{x}\left(  \tan\frac{s}{2}\right)  ^{1/\varepsilon}ds$ can be expressed
explicitly in terms of hypergeometric functions (produced for example by
symbolic manipulations in \textit{Mathematica} version 6). In the numerical
calculation of $\widetilde{X}^{(2k)}$ and subsequently of $u_{1}$ we may
approximate the first integral above by the sum of corresponding integrals on
subintervals, each one calculated as follows:
\[
\int_{x_{j-1}}^{x_{j}}\left(  \tan\frac{s}{2}\right)  ^{1/\varepsilon
}\widetilde{X}^{2(k-1)}(s)\,ds\approx\widetilde{X}^{2(k-1)}(\frac
{x_{j}+x_{j-1}}{2})\int_{x_{j-1}}^{x_{j}}\left(  \tan\frac{s}{2}\right)
^{1/\varepsilon}\,ds.
\]
We show the results of application of the SPPS method in comparison with some
values calculated previously. For $\varepsilon=0.5$ our values lie between
those previously published. For $\varepsilon=0.1$ our results display a
remarkable agreement with those of \cite{Davies}.

\begin{center}
Eigenvalues of (\ref{HeatExample}) for $\varepsilon=0.5$

\medskip%
\begin{tabular}
[c]{|l|r@{.}l|r@{.}l|r@{.}l|}\hline
n & \multicolumn{2}{c|}{$\lambda_{n}$ \cite{Weir 2}} &
\multicolumn{2}{c|}{$\lambda_{n}$ \cite{Chugunova}} &
\multicolumn{2}{c|}{$\lambda_{n}$ SPPS}\\\hline
1 & 1 & 16714 & 1 & 167342 & 1 & 16723\\\hline
2 & 2 & 96821 & 2 & 968852 & 2 & 96844\\\hline
3 & 5 & 48168 & 5 & 483680 & 5 & 48268\\\hline
4 & 8 & 71272 & 8 & 715534 & 8 & 71354\\\hline
5 & 12 & 66119 & \multicolumn{2}{c|}{} & 12 & 6618\\\hline
6 & 17 & 32643 & \multicolumn{2}{c|}{} & 17 & 3275\\\hline
7 & 22 & 71033 & \multicolumn{2}{c|}{} & 22 & 7110\\\hline
8 & 28 & 81106 & \multicolumn{2}{c|}{} & 28 & 8122\\\hline
9 & 35 & 62928 & \multicolumn{2}{c|}{} & 35 & 6311\\\hline
10 & 43 & 16666 & \multicolumn{2}{c|}{} & 43 & 1677\\\hline
\end{tabular}

\newpage\medskip Eigenvalues of (\ref{HeatExample}) for $\varepsilon=0.1$

\medskip%
\begin{tabular}
[c]{|l|r@{.}l|r@{.}l|r@{.}l|r@{.}l|}\hline
n & \multicolumn{2}{c|}{$\lambda_{n}$ \cite{Benilov}} &
\multicolumn{2}{c|}{$\lambda_{n}$ \cite{Weir 2}} &
\multicolumn{2}{c|}{$\lambda_{n}$ \cite{Davies}} &
\multicolumn{2}{c|}{$\lambda_{n}$ SPPS}\\\hline
1 & 1 & 0097 & 1 & 00940 & 1 & 00968 & 1 & 00968\\\hline
2 & 2 & 0733 & 2 & 07305 & 2 & 07334 & 2 & 07334\\\hline
3 & 3 & 2297 & 3 & 22894 & 3 & 22978 & 3 & 22978\\\hline
4 & 4 & 5012 & 4 & 50088 & 4 & 50134 & 4 & 50134\\\hline
5 & 5 & 8992 & 5 & 89968 & 5 & 89993 & 5 & 89993\\\hline
6 & 7 & 4298 & 7 & 43154 & 7 & 43194 & 7 & 43194\\\hline
7 & 9 & 0951 & 9 & 10034 & 9 & 10097 & 9 & 10097\\\hline
8 & 10 & 8945 & 10 & 90881 & 10 & 9092 & 10 & 9092\\\hline
9 & 12 & 8252 & 12 & 85742 & 12 & 8578 & 12 & 8578\\\hline
10 & 14 & 8820 & 14 & 94727 & 14 & 9478 & 14 & 9478\\\hline
15 & \multicolumn{2}{c|}{} & \multicolumn{2}{c|}{} & 27 & 5331 & 27 &
5331\\\hline
20 & \multicolumn{2}{c|}{} & \multicolumn{2}{c|}{} & 43 & 74 & 43 &
6923\\\hline
\end{tabular}

\end{center}

We also show the results we obtained for $\varepsilon=0.01$, for which we do
not know any previous result. For this reason we include information on the
accuracy of the eigenfunctions, estimated as follows. The calculated
eigenfunctions $u$ were effectively substituted into the differential equation
(\ref{HeatExSL}) by applying the corresponding integral operators which in
principle should produce $\lambda u$, and the discrepancy $\delta_{1}$ at the
right endpoint $\pi$ was tabulated. Then the discrepancy $\delta_{2}$ in the
right boundary condition was evaluated.

\begin{center}
Eigenvalues of (\ref{HeatExample}) for $\varepsilon=0.01$

\medskip%
\begin{tabular}
[c]{|l|r@{.}l|r@{.}l|r@{.}l|}\hline
n & \multicolumn{2}{c|}{$\lambda_{n}$ SPPS} & \multicolumn{2}{c|}{$\delta_{1}%
$} & \multicolumn{2}{c|}{$\delta_{2}$}\\\hline
1 & 1 & 0001 & 1 & 0$\times$10$^{-14}$ & 3 & 8$\times$10$^{-16}$\\\hline
2 & 2 & 0008 & 1 & 5$\times$10$^{-13}$ & 1 & 7$\times$10$^{-14}$\\\hline
3 & 3 & 00269 & 3 & 0$\times$10$^{-12}$ & 7 & 3$\times$10$^{-14}$\\\hline
4 & 4 & 00638 & 4 & 9$\times$10$^{-11}$ & 2 & 9$\times$10$^{-12}$\\\hline
5 & 5 & 01243 & 6 & 8$\times$10$^{-10}$ & 1 & 3$\times$10$^{-11}$\\\hline
6 & 6 & 02143 & 8 & 8$\times$10$^{-9}$ & 6 & 5$\times$10$^{-9}$\\\hline
7 & 7 & 03393 & 3 & 2$\times$10$^{-7}$ & 6 & 7$\times$10$^{-8}$\\\hline
8 & 8 & 05048 & 6 & 2$\times$10$^{-6}$ & 3 & 4$\times$10$^{-7}$\\\hline
9 & 9 & 07162 & 0 & 000090 & 4 & 0$\times$10$^{-6}$\\\hline
10 & 10 & 098 & 0 & 014 & 0 & 00027\\\hline
11 & 11 & 0223 & 10 & 8 & 0 & 0011\\\hline
\end{tabular}

\end{center}

In \cite{Davies/Weir} it was proved that the set of eigenvalues of the problem
tends to $\mathbb{Z}$ as $\varepsilon\rightarrow0$. Figure 1 (where the
eigenvalues have been calculated by SPPS) illustrates this assertion.%
\begin{figure}
[ptb]
\begin{center}
\includegraphics[
height=2.6152in,
width=4.2134in
]%
{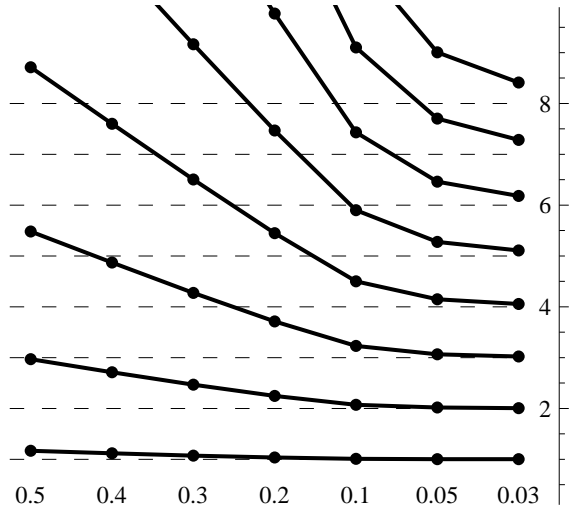}%
\caption{Convergence of $\lambda_{n}$ to $n$ as $\varepsilon\rightarrow0$.}%
\end{center}
\end{figure}

\section{Conclusions}

\smallskip In the present work a new representation for solutions of the
Sturm-Liouville equation is presented, which we call a spectral parameter
power series (SPPS). It gives a new natural and powerful method for solving
initial value, boundary value, and spectral problems. We have shown that it is
applicable not only to regular problems but also to singular problems as well
as to Sturm-Liouville problems with spectral parameter dependent boundary
conditions. The numerical realization of the method is simple, and its ideas
can be explained to undergraduate students in engineering, mathematics, and
physics. One of the important advantages of the SPPS method consists in the
automatic calculation of eigenfunctions together with the eigenvalues of a
spectral problem. The SPPS method is competitive with the best software for
numerical solution of ordinary differential equations and spectral problems
currently available. The authors' goal has been to show the broad
applicability and competitiveness of the method rather than to achieve
spectacular numerical results; we believe that much better performance of the
SPPS method can be achieved with the implementation of additional
computational techniques, some of which have been mentioned in this paper.

\textbf{Acknowledgements: }The authors express their gratitude to CONACYT for
partial support of this work as well as to Prof. Michael Levitin for
attracting our attention to the problem considered in section
\ref{SectSingProblems}.


\begin{thebibliography}{99}                                                                                               %


\bibitem {BenAmara}{\small Ben Amara J and Shkalikov A A 1999 A
Sturm-Liouville problem with physical and spectral parameters in boundary
conditions. Mathematical Notes 66, no. 2, 127--134.}

\bibitem {Benilov}{\small Benilov E S, O'Brien S B G and Sazonov I A 2003 A
new type of instability: explosive disturbances in a liquid film inside a
rotating horizontal cylinder. J. Fluid Mech. 497, 201--224. }

\bibitem {Boulton}{\small Boulton L, Levitin M and Marletta M 2008 A
PT-symmetric periodic problem with boundary and interior singularities.
arxiv:0801.0172v1.}

\bibitem {Chanane1998}{\small Chanane B 1998 Eigenvalues of Sturm-Liouville
problems using Fliess series. Applicable Analysis 69, 233--238.}

\bibitem {Chanane2008}{\small Chanane B 2008 Sturm-Liouville problems with
parameter dependent potential and boundary conditions. J. Comput. Appl. Math.
212 , no. 2, 282--290.}

\bibitem {ChildChambers}{\small Child M S and Chambers A V 1988 Persistent
accidental degeneracies for the Coffey-Evans potential. J. Phys. Chem 92,
3122--3124.}

\bibitem {Chugunova}{\small Chugunova M and Pelinovsky D 2007 Spectrum of a
non-self-adjoint operator associated with the periodic heat equation, preprint
http://arxiv.org/abs/math-ph/0702100v2.}

\bibitem {CodeBrowne2005}{\small Code W J and Browne P J 2005 Sturm-Liouville
problems with boundary conditions depending quadratically on the
eigenparameter. J. Math. Anal. Appl. 309, no. 2, 729--742.}

\bibitem {CoskunBayram2005}{\small Co\c{s}kun H and Bayram N 2005 Asymptotics
of eigenvalues for regular Sturm-Liouville problems with eigenvalue parameter
in the boundary condition. J. Math. Anal. Appl. 306, no. 2, 548--566. }

\bibitem {Davies}{\small Davies E B 2007 An indefinite  convection-diffusion
operator. LMS J. Comp. Math. 10, 288--306.}

\bibitem {Davies/Weir}{\small Davies E B and Weir J 2008 Convergence of
eigenvalues for a highly non-self-adjoint differential operator.
arxiv:0809.0787v1.}

\bibitem {Fulton77}{\small Fulton Ch T 1977 Two-point boundary value problems
with eigenvalue parameter contained in the boundary conditions. Proc. Roy.
Soc. Edinburgh Sect. A 77, no. 3--4, 293--308. }

\bibitem {GP1974}{\small Guillemin V and Pollack A 1974 Differential topology.
Prentice-Hall, Inc., Englewood Cliffs, N.J.}

\bibitem {KrCV2008}{\small Kravchenko V V 2008 A representation for solutions
of the Sturm-Liouville equation.\ Complex Variables and Elliptic Problems,
2008, v. 53, 775--789.}

\bibitem {Ledoux}{\small Ledoux V 2007 Study of Special Algorithms for
solving Sturm-Liouville and Schr\"{o}dinger Equations, thesis  Universiteit
Gent}

\bibitem {Levitan}{\small Levitan B M and Sargsjan I S 1991 Sturm-Liouville
and Dirac operators. Dordrecht: Kluwer Acad. Publ.}

\bibitem {Trubowitz}{\small P\"{o}schel J and Trubowitz E 1987 Inverse
spectral theory. Boston: Academic Press.}

\bibitem {PainedeHoogAnderssen}{\small Paine J W, De Hoog F R and Anderssen R
R 1981 Computing 26, 123--139}

\bibitem {Pryce}{\small Pryce J D 1993 Numerical solution of Sturm-Liouville
problems. Clarendon Press.}

\bibitem {Walter}{\small Walter J 1973 Regular eigenvalue problems with
eigenvalue parameter in the boundary condition. Math. Z. 133, 301--312. }

\bibitem {Weir}{\small Weir J 2008 An indefinite convection-diffusion operator
with real spectrum. Applied Mathematics Letters, in press.}

\bibitem {Weir 2}{\small Weir J 2008 Correspondence of the eigenvalues of a
non-self-adjoint operator to those of a self-adjoint operator.
arxiv:0801.4959v2.}

\bibitem {Weyl}{\small Weyl H 1910 \"{U}ber gew\"{o}hnliche
Differentialgleichungen mit Singularit\"{a}ten und die zugeh\"{o}rigen
Entwicklungen willk\"{u}rlicher Funktionen. (German) Math. Ann. 68, no. 2,
220--269.}
\end{thebibliography}
\end{document}